\pgfplotsset{compat=1.16}
\newtheorem{theorem}{Theorem}
\newtheorem{lemma}{Lemma}
\newtheorem{prop}{Proposition}
\newenvironment{definition}[1][Definition]{\begin{trivlist}
\item[\hskip \labelsep {\bfseries #1}]}{\end{trivlist}}
\newenvironment{assumption}[1][Assumption 1]{\begin{trivlist}
\item[\hskip \labelsep {\bfseries #1}]}{\end{trivlist}}
\newcommand{\blind}{1}
\begin{document}

\def\spacingset#1{\renewcommand{\baselinestretch}%
{#1}\small\normalsize} \spacingset{1}


\if1\blind
{
  \title{\bf Pathway Testing in Metabolomics with Globaltest, Allowing Post Hoc Choice of Pathways}
  \author{Ningning Xu\thanks{
    The authors acknowledge NWO VIDI grant number 639.072.412.}\hspace{.2cm}\\
    Department of Biomedical Data Sciences, Leiden University Medical Center\\
    and \\
    Aldo Solari \\
    Department of Economics, Management and Statistics, University of Milano-Bicocca \\
    and \\
    Jelle Goeman \\
    Department of Biomedical Data Sciences, Leiden University Medical Center}
  \maketitle
} \fi

\if0\blind
{
  \bigskip
  \bigskip
  \bigskip
  \begin{center}
    {\LARGE\bf Pathway Testing in Metabolomics with Globaltest, Allowing Post Hoc Choice of Pathways}
\end{center}
  \medskip
} \fi

\bigskip
\begin{abstract}
The Globaltest is a powerful test for the global null hypothesis that there is no association between a group of features and a response of interest, which is popular in pathway testing in metabolomics. Evaluating multiple pathways, however, requires multiple testing correction. In this paper, we propose a multiple testing method, based on closed testing, specifically designed for the Globaltest. The proposed method controls the family-wise error rate simultaneously over all possible feature sets, and therefore allows post hoc inference, i.e.\ the researcher may choose the pathway database after seeing the data without jeopardizing error control. To circumvent the exponential computation time of closed testing, we derive a novel shortcut that allows exact closed testing to be performed on the scale of metabolomics data. An \texttt{R} package \texttt{ctgt} is available on CRAN. We illustrate the shortcut on several metabolomics data examples.
\end{abstract}

\noindent%
{\it Keywords:} family-wise error rate; high-dimensional data; pathway analysis; selective inference; 
\vfill

\newpage
\spacingset{1.5} 
\section{Introduction}
\label{sec:intro}

In high-dimensional data, features may often be meaningfully taken together in sets or groups. It has been suggested in genomics that pathways, sets of biologically related genes, can better reflect the underlying biological events than single-SNP analysis \citep{wang2010analysing}. This is especially true in metabolomics, where each metabolic pathway is a group of functionally-associated metabolites. Analyzing metabolomics data in terms of such pathways provides mechanistic insights into the underlying biology \citep{xia2015metaboanalyst}. 

Many methods can be used for pathway testing in metabolomics, for instance ``GSEA'' \citep{subramanian2005gene}, ``PAGE'' \citep{kim2005page} and ``SAM-GS'' \citep{dinu2007improving}. One very popular method is the Globaltest \citep{goeman2004global}, originally proposed in the context of gene set testing in transcriptomics. \cite{goeman2006testing} proved that Globaltest is locally most powerful on average in a neighborhood of the null hypothesis and remains valid and powerful in high-dimensional data with more features than observations. Furthermore, the method adapts to the correlation structure in the data. In MetaboAnalyst \citep{xia2015metaboanalyst}, a web-based analytical pipeline for metabolomics data, Globaltest is the default testing approach for pathway analysis.

When many pathways are tested, multiple testing correction methods are necessary. Researchers may either control the family-wise error rate (FWER), the probability of at least one false rejection, or the false discovery rate (FDR), the expected proportion of false rejections among all rejections. We follow \cite{meijer2016multiple}, who argued that FWER is more appropriate for pathway testing than FDR, which can be difficult to interpret when pathways are nested. To control FWER, several multiple testing methods have been proposed combining Globaltest with Bonferroni, such as the Focus Level procedure \citep{Goeman2008multiple}, DAG \citep{Meijer2015dag}, and Structured Holm \citep{meijer2016multiple}. All these methods control the error rate over a collection of pathways that must be specified before the data are seen. When there are numerous pathways with large overlaps, Bonferroni-based methods may be overly conservative.

There have been a surge of specialized databases for metabolic pathways in the past decade \citep{likic2006databases}, such as KEGG (Kyoto Encyclopedia of Genes and Genomes) \citep{kanehisa2002kegg}, WikiPathways \citep{slenter2017wikipathways}, BioCyc \citep{karp2019biocyc} and SMPDB (The Small Molecule Pathway Database) \citep{frolkis2010smpdb}. 
While the information in these pathway databases overlaps, the databases differ markedly from each other. Each contains many unique pathways not present in other databases. Even when the same pathway is present in multiple databases, the metabolites associated with the pathway in each database can be quite different \citep{mubeen2019impact}. The abundance of pathway information in all these databases is in principle a good thing, providing a wealth of information for the researcher to use. However it creates a conundrum when correcting for multiple testing, since current methods require researchers to declare, before seeing the data, which pathway database(s) will be used. Researchers either have to restrict to a single database, limiting the information available to them, or declare to use all databases and incur an unnecessarily large multiple testing penalty due to the numerous overlapping pathways. Too often, current practice, unfortunately, is to correct for multiple testing only within each database even when multiple databases have been used \citep{lopez2016mbrole}.

To illustrate the loss of FWER control when correcting for multiple testing only within database, we simulated null data by permuting the 0/1 response 2000 times based on a real data set with 92 observations and 47 metabolites \citep{taware2018volatilomic}. 
We consider a total of 6 annotation databases: ``KEGG", ``Biocyc", ``SMPDB", ``Biofunction'', ``Protein'' from Metabolites Biological Role (MBROLE) \citep{lopez2016mbrole} and ``Wiki'' from ``rWikiPathways'' \citep{slenter2017wikipathways}. 
Table \ref{tab:FWER} shows that DAG, Focus Level and Structured Holm control the FWER at the pre-specified level of $5\%$ within each database, as expected. However, when considering all databases simultaneously, an error corresponds to rejecting at least one true null hypothesis in any database. In this case, DAG, Focus Level and Structured Holm have a FWER of  11.9\%, 12.3\% and 8.3\%, respectively, largely exceeding the $5\%$ level.

\begin{table}[!htbp]
\caption{FWER for different methods (DAG, Focus Level and Structured Holm at level $5\%$) per database and simultaneously over all databases.} \label{tab:FWER}
\centering
    \begin{tabular}{r|cccccc|r}
     \toprule
Database             & KEGG    & Biocyc   &   SMPDB   &   Biofunction & Protein &   Wiki   &  Overall\\
 No. of pathways     & 16    & 18       &  4        &   8          & 12      &    7      &  65\\
        \toprule
DAG          & 3.9\%      & 3.5\%    & 4.5\%     & 3.4\%              & 4.4\%             &   3.0\%       & 11.9\% \\ 
Focus Level  & 3.8\%      & 3.6\%      & 4.5\%     & 3.7\%              & 3.8\%             &  3.1\%      & 12.3\% \\ 
Structured Holm & 2.3\% & 2.9\%    & 3.2\%       & 3.6\%                & 2.7\%               &  2.9\%        & 8.3\% \\ 
   \bottomrule
  \end{tabular}
\end{table}

Ideally, researchers should be able to choose the pathway databases after seeing the data, while still maintaining strict FWER control. Such post hoc inference is possible using closed testing \citep{marcus1976closed}. Closed testing controls FWER for all possible feature sets, and therefore allows researchers to postpone the selection of pathways of interest after seeing the data. \cite{goeman2019only} proved that only closed testing procedures are admissible for FWER control of pathways, i.e.\ all other procedures either are equivalent to closed testing or can be improved using closed testing. Closed testing has been explored for pathway analysis in genomics by \cite[``SEA'',][]{mitra2019}, building upon applications in neuroimaging \citep{rosenblatt2018all}. However, ``SEA" uses the Simes' test \citep{simes1986improved} as the pathway test, which requires assumption on positive dependence of p-values \citep{sarkar1997simes} and is conservative when p-values are strongly dependent. Simes is an unusual pathway test to use in metabolomics, and it would be desirable to use Globaltest instead.

In this work, we develop closed testing for FWER control with Globaltest pathway analysis for application to metabolomics. The major challenge to perform closed testing is computational: it requires exponentially many tests. We develop novel shortcuts to overcome this limitation by reducing the exponential number of Globaltests to linear. We first propose a ``single-step'' shortcut that is fast but approximate to the full closed testing procedure. It guarantees strong FWER control but may be conservative. To gain power, we then embed the single-step shortcut within a branch and bound algorithm, leading to an ``iterative'' shortcut. The iterative shortcut will approximate the full closed testing procedure closer and closer as we iterate longer, trading computation time for power and converging eventually to the exact closed testing result. On the scale of typical metabolomics data ($\approx$ 300 features), the exact closed testing result for a pathway can be obtained in seconds on a regular PC.

Although Globaltest is derived in the context of all generalized linear models we focus in this paper on logistic regression only, which is the most popular generalized linear model used with Globaltest. We first revisit Globaltest and its properties in Section \ref{sec:gt} and the closed testing procedure in Section \ref{sec:ct}. We introduce the main idea of the single-step shortcut in Section \ref{sec:sss} and the iterative extension in Section \ref{sec:it}. In the remaining sections, we explore applications of our method on real metabolomics datasets.

\section{The Globaltest}\label{sec:gt}

The data generated in metabolomics usually consist of measurements performed on subjects under different conditions. Suppose we have $n$ independent subjects on which we have measured hundreds of metabolite variables and a $0/1$ response representing two different conditions. We gather the $n$ observations into a response vector $\mathbf{y}$ and a design matrix partitioned into an $n\times m$ matrix $\mathbf{X}$ of metabolites and an $n\times z$ matrix $\mathbf{Z}$ including the intercept term and potential confounders we would like to adjust for, e.g. age and gender. As we are interested in large-scale metabolomics, we allow the number of metabolites to be larger than the number of subjects, i.e. $m > n$, although we assume that $z < n$.

To denote a pathway we will use the index set $R \subset \{1,\ldots,m\}$ of the metabolites it includes, and we will write $r=|R|$ for its cardinality and $\mathbf{X}_R$ for the submatrix of $\mathbf{X}$ formed from columns indexed by $R$. Globaltest \citep{goeman2004global, goeman2006testing, goeman2011testing} is a powerful testing method to test pathways for association with the response, especially in the case that many metabolites in the pathway are associated with the response in a small way. The Globaltest assumes the logistic model
\begin{equation}\label{eq:log}
\mathbb{E}(\mathbf{y} \mid \mathbf{Z}, \mathbf{X}_R) =  h( \mathbf{Z} \bm{\gamma} + \mathbf{X}_R \bm{\beta}), 
\end{equation}
to test the null hypothesis
$$\bm{\beta} = \mathbf{0}$$
for a $r$-dimensional vector $\bm{\beta}$, where $h(t) = \exp(t)/(1+\exp(t))$ is the standard logistic function and $\bm{\gamma}$ is a $z$-dimensional vector of nuisance parameters. 

The Globaltest statistic for testing $\bm{\beta} = \mathbf{0}$ in the model (\ref{eq:log}) is given by
\begin{equation}\label{eq:gt}
g_R = \mathbf{y}^\intercal(\mathbf{I}-\mathbf{H}) \mathbf{X}_R \mathbf{X}_R^\intercal (\mathbf{I}-\mathbf{H}) \mathbf{y}
\end{equation}
where $\mathbf{I}$ is the identity matrix of size $n$ and $\mathbf{H} = \mathbf{Z}(\mathbf{Z}^\intercal \mathbf{Z})^{-1}\mathbf{Z}^\intercal$. 
It can be seen from \eqref{eq:gt} that the Globaltest statistic is the sum of the test statistics calculated for the single metabolites in $R$, that is, $g_R = \sum_{i\in R} g_i$, where $g_i = \mathbf{y}^\intercal(\mathbf{I}-\mathbf{H}) \mathbf{X}_i \mathbf{X}_i^\intercal (\mathbf{I}-\mathbf{H}) \mathbf{y}$.

Theorem 1 in \cite{goeman2011testing} shows that the null distribution of $g_R$ is asymptotically equivalent to a weighted sum of independent $\chi^2_1$ variables, i.e.\
\begin{equation}\label{eq:asygt}
\sum_{i=1}^n \lambda_i^R \chi^2_1
\end{equation} 
where the weights $\lambda_1^R\geq  \cdots\geq  \lambda_n^R$ are the eigenvalues of the positive semidefinite matrix $\bm{\Sigma}^{1/2}(\mathbf{I}-\mathbf{H}) \mathbf{X}_{R} \mathbf{X}^\intercal_{R} (\mathbf{I}-\mathbf{H}) \bm{\Sigma}^{1/2}$ in descending order. Here $\bm{\Sigma}$ is the diagonal covarance matrix of $\mathbf{y}$ under the null hypothesis, with entries $\mathbb{E}(\mathbf{y} \mid \mathbf{Z}) (1-\mathbb{E}(\mathbf{y} \mid \mathbf{Z})) $.

For a pre-specified significance level $\alpha$, we can approximate the Globaltest critical value 
\begin{equation*}\label{eq:cv}
 c_R = c(\bm{\lambda}^R)
\end{equation*}
by the $1-\alpha$ quantile of the asymptotic null distribution, where we have made the dependence of $c_R$ on the eigenvalues $\bm{\lambda}^R = (\lambda_1^R, \cdots, \lambda_n^R)$ explicit. To compute the critical value $c_R$, \cite{goeman2011testing} suggested to use the algorithms proposed by \cite{Imhof1961} and \cite{Robbins1949}. Robbins' algorithm, though slightly slower on average, is numerically more stable and less vulnerable to the problem of premature convergence.

The following Proposition \ref{prop:asygt} shows that the Globaltest $\phi_R = \mathds{1}\{g_R \geq c_R\}$ is an asymptotically valid $\alpha$-level test. The proof is in the supplementary material.
\begin{prop}\label{prop:asygt}
Assume that the logistic model in (\ref{eq:log}) holds with $\bm{\beta}=0$, then
$$\lim_{n\rightarrow \infty} \mathbb{E}(\phi_R) \leq \alpha, $$
that is, the Globaltest has asymptotic type I error control. 
\end{prop}

\section{Closed Testing}\label{sec:ct}

When testing pathways, we are interested in finding pathways in which there is evidence of some association between the signal of the metabolites in the pathway and the response. We suppose some metabolites are associated with the response and some metabolites are not associated with the response, i.e.\ \emph{null metabolites}. As usual with Globaltest, we adopt the self-contained paradigm for pathway testing \citep{goeman2007analyzing}, in which a null pathway is defined as a pathway containing only null metabolites. Let $F = \{1,\cdots,m\}$ be the set of all metabolites, and $N \subseteq F$ the set of all null metabolites. For any pathway $R \subseteq F$, the self-contained null hypothesis of pathway $R$ is
\begin{equation}\label{eq:hr}
H_R: R\subseteq N.
\end{equation}

To allow post hoc inference we will control FWER for the family $\mathcal{F} = 2^F$ of all $2^m$ possible pathways. The collection of null pathways is $\mathcal{N} = \{I \in \mathcal{F}\colon I \subseteq N\}$. Our goal is to design a test procedure that rejects the collection of pathways $\mathcal{X} \subseteq \mathcal{F}$ in such a way that FWER is controlled, i.e.\
\begin{equation} \label{fwer}
\mathrm{Pr} (\mathcal{X} \cap \mathcal{N} \neq \emptyset) \leq \alpha.
\end{equation}

To obtain such FWER control we will use the closed testing procedure \citep{marcus1976closed}. Closed testing requires that the family of hypotheses is closed under intersection: for all $H_A$, $H_B$ in the family we should have $H_A \cap H_B$ in the family. This is easy to check for the hypothesis family $\{H_I\colon I \in \mathcal{F}\}$, since $H_A \cap H_B = H_{A\cup B}$. A null hypothesis $H_I$ is rejected by closed testing if and only if all the hypotheses that are implied by it have been rejected by a level $\alpha$ test. Formally, suppose for every $I \in F$, $\psi_I$ is a test of $H_I$ with 1 indicating rejection and 0 non-rejection. Closed testing rejects $\mathcal{X} = \{I \in \mathcal{F}\colon \psi^F_I=1\}$, where
\[
\psi^F_I = \min\{\psi_S\colon I \subseteq S \subseteq F\}. 
\]
The closed testing has FWER control (\ref{fwer}) if $\psi_N$ is a valid $\alpha$-level test of $H_N$, i.e. when $\mathbb{E}(\psi_N)\leq \alpha$ \citep{marcus1976closed}. This generalizes to asymptotic FWER control if the test for $H_N$ is asymptotically valid, as we summarize in the following Proposition \ref{prop:fwer}; see supplementary material for the proof.
\begin{prop}\label{prop:fwer}
If $\displaystyle \lim_{n\to\infty} \mathbb{E}(\psi_N) \leq \alpha$, then $\displaystyle \lim_{n\to\infty} \mathrm{Pr} (\mathcal{X} \cap \mathcal{N} \neq \emptyset) \leq  \alpha$.
\end{prop}

Based on the discussion above, to be able to use closed testing with Globaltest, we need to assume that the Globaltest $\phi_N$ is an asymptotically valid $\alpha$-level test of $H_N$. We thus assume that the logistic model holds for model $N$, i.e.\ 
\begin{assumption}\label{asp}
$\mathbb{E}(\mathbf{y} \mid \mathbf{Z}, \mathbf{X}_N)  =  h( \mathbf{Z} \bm{\gamma}) $.  
\end{assumption}

Under this assumption, Globaltest for $H_N$ is an asymptotically valid $\alpha$ level test, based on Proposition \ref{prop:asygt}, and consequently FWER control (\ref{fwer}) applies by Proposition \ref{prop:fwer}. We note that we only need to assume the correct model specification for one single logistic regression model, i.e. model $N$. This is important, since it is not generally possible for several nested logistic models to be simultaneously valid \citep{gail1984biased}. This robustness to model misspecification is a useful and often overlooked property of closed testing.

\section{Single-step Shortcut}\label{sec:sss}

A hypothesis $H_R$ can be rejected by closed testing if all hypotheses $H_S$ with $ R \subseteq S \subseteq F$ are rejected by the Globaltest at level $\alpha$. However, this results in exponential computational complexity of closed testing, which can be problematic for large-scale metabolomics data sets. Shortcuts, efficient algorithms, are thus necessary to reduce computation time \citep{brannath2010shortcuts,gou2014class,dobriban2018flexible}. Shortcuts can be exact or approximate. Approximate shortcuts control FWER, but sacrifice power relative to the full closed testing procedure. In this paper, we first derive an approximate single-step shortcut and subsequently an exact iterative shortcut for closed testing with Globaltest. We start with the single-step shortcut.

\subsection{Main Idea}
For any set $R$ of interest, closed testing rejects $H_R$ with asymptotic FWER control at level $\alpha$ if and only if \[g_S \geq c_S, \text{ for all } R\subseteq S \subseteq F. \] This means that naively we have to calculate test statistics and critical values of Globaltest for a total of $2^{\vert F\vert - \vert R\vert}$ hypotheses. 

For illustration, we use a recurring toy example with $n=100$ observations, $m=5$ features and a binary response. Let $F = \{1,2,3,4,5\}$ be the index set of all features. Suppose that we want to test whether $R = \{3\}$ is associated with the response or not. By closed testing, we have to calculate all test statistics $g_S$ and critical values $c_S$ for testing all $H_S$ with $R \subseteq S \subseteq F$. All $g_S$ and $c_S$ are presented in Figure \ref{fig:tc} by circles and triangles respectively. For each pair of $(g_S, c_S)$, if circles are above triangles, closed testing then rejects $H_R$.

\begin{figure}[!ht]
\centering
\includegraphics[scale=0.68]{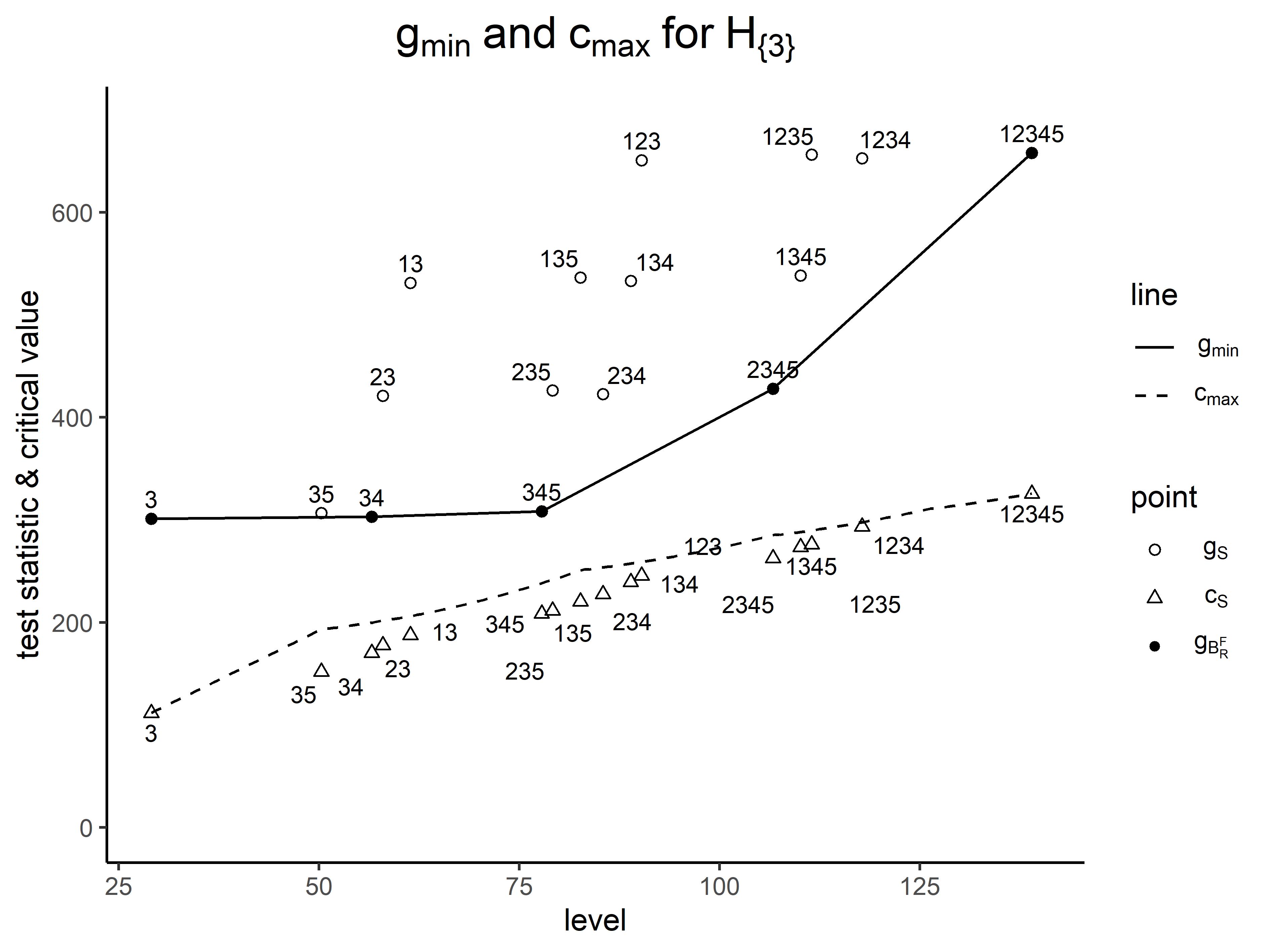}
\caption{Single-step shortcut for testing $H_{\{3\}}$. Circles and triangles denote test statistics and critical values, respectively, for all $H_S$ with $R \subseteq S \subseteq F$. The solid line represents the minimum test statistic $g_{min}(\ell)$ and the dashed line represents the maximal critical value $c_{max}(\ell)$.  }
\label{fig:tc}
\end{figure}

Define $\ell_S = \sum_{i=1}^{n}\lambda_i^S $ as the ``level'' of $H_S$. 
This definition facilitates the construction of the ``single-step'' shortcut since 
the comparison between $g_S$ and $c_S$ is restricted to the range $[\ell_R, \ell_F]$; see the x-axis in Figure \ref{fig:tc}.
The main idea of the single-step shortcut is as follows. We propose to construct a minimum test statistic line $g_{min}(\ell)$ and a maximal critical value line $c_{max}(\ell)$, both are monotonically increasing on $\ell\in [\ell_R, \ell_F]$ such that 
\begin{equation}\label{eq:gmin_i}
g_S \geq g_{min}(\ell_S) \tag{i}
\end{equation}
and
\begin{equation}\label{eq:cmax_ii}
c_S \leq c_{max}(\ell_S). \tag{ii}
\end{equation}
If such $g_{min}(\ell)$ and $c_{max}(\ell)$ can be established, we then simply compare the two lines instead of the exponentially many test statistics and critical values. When the $g_{min}$ line is above the $c_{max}$ line, $H_R$ is surely rejected by closed testing, as the following lemma says. 
\begin{lemma}\label{lm:ss}
If $g_{min}$ and $c_{max}$ satisfy \eqref{eq:gmin_i} and \eqref{eq:cmax_ii}, respectively, then closed testing rejects $H_R$ at level $\alpha$ when $g_{min}(\ell) \geq c_{max}(\ell)\,\, \forall \, \ell \in [\ell_R ,\ell_F].$
\end{lemma}
Proof of Lemma \ref{lm:ss} is in the supplemental file. In the following we will show how to construct $g_{min}(\ell)$ and $c_{max}(\ell)$.

\subsection{The Minimum Test Statistic}\label{sec:gmin}

We define the minimum test statistic $g_{min}(\ell), \ell \in [\ell_R, \ell_F]$ by using  the lower convex hull algorithm \citep{bronnimann2004space}. Given a set of points in the plane, the lower convex hull algorithm computes a line through the bottommost-leftmost point and the topmost-rightmost point (on the $x$-coordinate) via concatenating a certain in-between bottommost points such that all points are above or on the line. See for example the solid line in Figure \ref{fig:tc}, where the bottommost-leftmost point is $R= \{3\}$, the topmost-rightmost point is $F = \{1,2,3,4,5\}$ and the bottommost points are $\mathcal{B}^F_R = \{\{3\}, \{34\}, \{3,4,5\}, \{2,3,4,5\}, \{1,2,3,4,5\} \}$. 

By definition of lower convex hull, we have that
\begin{lemma}\label{lm:gmin}
$g_{min}(\ell)$ satisfies \eqref{eq:gmin_i} for all $S$ with $R\subseteq S \subseteq F$.
\end{lemma}
More details about the technical constructions of $g_{min}(\ell)$ are available in the supplementary file along with the proof of Lemma \ref{lm:gmin}. It is worth noting that constructing $g_{min}(\ell)$ in this way dramatically reduces the computational burden, since exponentially many calculations of the full closed testing is prevented. In the following, we will introduce the second half of the shortcut: how to compute the maximal critical values. 

\subsection{The Maximal Critical Value}\label{sec:cmax}

As discussed in Section \ref{sec:gt}, the critical value is a function of the eigenvalue vector, and so is the maximal critical value. We thus need to find out a numeric vector for which the corresponding critical value is maximal among all the critical values at the same level. To establish such a vector, we first introduce the definition of \textit{majorization} \citep{horn2012matrix}:
\begin{definition}\label{def:mj}
Let vectors $\bm{\lambda}=(\lambda_1,\cdots, \lambda_n)$ with $\lambda_1 \geq \cdots \geq \lambda_n$ and $\bm{\delta} = (\delta_1, \cdots,\delta_n)$ with $\delta_1 \geq \cdots \geq \delta_n$ be given. Then $\bm{\lambda}$ is said to majorize $\bm{\delta}$, i.e.\ $\bm{\lambda} \succ \bm{\delta}$ if $\sum_{i=1}^s \lambda_i \geq \sum_{i=1}^s \delta_i$ for all $s=1,\cdots,n$ with equality for $s=n$.
\end{definition}

By \textit{inclusion principle} for hermitian and positive semi-definite matrix \citep{horn2012matrix}, we learn that $\lambda_i^R \leq \lambda_i^S \leq \lambda_i^F, i=1,\cdots, n$ for $R \subseteq S \subseteq F$,  where $\lambda_i^R$, $\lambda_i^S$ and $\lambda_i^F$ are the $i$-th largest eigenvalues of matrices as defined in Equation \eqref{eq:asygt} in Section \ref{sec:gt}. Thus, $\bm{\lambda}^S$ is between the upper bound $\bm{\lambda}^F$ and the lower bound $\bm{\lambda}^R$. Then at level $\ell\in [\ell_R, \ell_F]$, we define a ``majorizing vector'' as
\begin{equation}\label{eq:mev}
\hat{\bm{\lambda}}^F_R(\ell) = (\lambda_1^F, \cdots, \lambda_{j_{\ell}-1}^F, \eta(\ell), \lambda_{j_{\ell}+1}^R, \cdots, \lambda_n^R). 
\end{equation}
where $j_{\ell} = \min\{s: \sum_{i=1}^s \lambda^{F|R}_i \geq (\ell- \ell_R)\}$, $\bm{\lambda}^{F|R} = (\lambda^F_1 - \lambda^R_1,\cdots, \lambda^F_n-\lambda^R_n)$ is the pairwise difference of $\bm{\lambda}^F$ and $\bm{\lambda}^R$ and $\eta(\ell) = \lambda_{j_{\ell}}^R + (\ell-\ell_R - \sum_{i=1}^{j_{\ell}-1} \lambda^{F|R}_i)$. For the special case with $j_{\ell} = 1$, we let $\eta(\ell) = \lambda_{j_{\ell}}^R + (\ell-\ell_R)$ and thus $\hat{\bm{\lambda}}^F_R(\ell) = ( \eta(\ell), \lambda_2^R, \cdots, \lambda_n^R)$. It is also obvious that $\hat{\bm{\lambda}}^F_R(\ell_F) = \bm{\lambda}_F$. 

The majorizing vector at level $\ell$ simply takes the first few largest values of the upper bound $\bm{\lambda}^F$ as head and the last few smallest values of the lower bound $\bm{\lambda}^R$ as tail, and connecting them by an $\eta(\ell)$ such that $\hat{\bm{\lambda}}^F_R(\ell)$ is in descending order and its sum is $\ell$ based on Equation \eqref{eq:mev}. Obviously, $\hat{\bm{\lambda}}^F_R(\ell)$ is still bounded by $\bm{\lambda}^F $ and $ \bm{\lambda}^R$, but it majorizes all other eigenvalue vectors at the same level $\ell$. 

We argue that the critical value that is computes by the majorizing vector is maximal among the ones at the same level, in terms of the following theorem in \cite{bock1987inequalities}.
\begin{theorem}\label{tm:main}
Suppose that $\bm{\lambda} \succ \bm{\delta}$. Then there exists an $\alpha_0$ such that,  for $\alpha \leq \alpha_0$, we have
$$c(\bm{\lambda}) \geq c(\bm{\delta}).$$
\end{theorem}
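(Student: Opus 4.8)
The plan is to turn the quantile inequality into a comparison of upper tail probabilities, and then to read the ordering of the tails off the majorization hypothesis through the exact tail asymptotics of a weighted sum of $\chi^2_1$ variables. Throughout write $Q_{\bm{\lambda}} = \sum_{i=1}^n \lambda_i \chi^2_{1,i}$ with independent $\chi^2_{1,i}$, so that $c(\bm{\lambda})$ is the upper $\alpha$-quantile, $\mathbb{P}(Q_{\bm{\lambda}} \ge c(\bm{\lambda})) = \alpha$. First I would record the elementary reduction: since $t \mapsto \mathbb{P}(Q_{\bm{\lambda}}\ge t)$ is nonincreasing, if the strict tail dominance $\mathbb{P}(Q_{\bm{\lambda}}\ge t) > \mathbb{P}(Q_{\bm{\delta}}\ge t)$ holds for all $t \ge T_0$, then $c(\bm{\lambda})\ge c(\bm{\delta})$ whenever $\alpha$ is small enough that $c(\bm{\delta})\ge T_0$. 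Indeed, $c(\bm{\lambda}) < c(\bm{\delta})$ would give, by monotonicity, $\mathbb{P}(Q_{\bm{\lambda}}\ge c(\bm{\delta})) \le \mathbb{P}(Q_{\bm{\lambda}}\ge c(\bm{\lambda})) = \alpha$, while strict dominance gives $\mathbb{P}(Q_{\bm{\lambda}}\ge c(\bm{\delta})) > \mathbb{P}(Q_{\bm{\delta}}\ge c(\bm{\delta})) = \alpha$, a contradiction. Because quantiles diverge as $\alpha\to 0$, the threshold $c(\bm{\delta})\ge T_0$ is met for all $\alpha \le \alpha_0$, which is exactly the form of the claim, so it suffices to establish strict tail dominance for large $t$ (with equality reserved for the trivial case $\bm{\lambda}=\bm{\delta}$).

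Next I would extract the tail asymptotics from the moment generating function $M_{\bm{\lambda}}(s) = \prod_i (1-2\lambda_i s)^{-1/2}$, whose dominant singularity on the positive axis sits at $s_0 = 1/(2\mu)$ with $\mu = \lambda_1$. Letting $p$ be the multiplicity of the largest weight, I would split $Q_{\bm{\lambda}} = \mu\chi^2_p + Z$ with $Z = \sum_{i>p}\lambda_i\chi^2_{1,i}$ independent and with MGF finite at $s_0$. The convolution-tail result for distributions with an exponential rate (the class $\mathcal{S}(s_0)$) then yields
\[
\mathbb{P}(Q_{\bm{\lambda}}\ge t) \sim A_{\bm{\lambda}}\,\mathbb{P}(\mu\chi^2_p\ge t) \sim \frac{A_{\bm{\lambda}}}{\Gamma(p/2)}\Big(\frac{t}{2\mu}\Big)^{p/2-1} e^{-t/(2\mu)},
\qquad A_{\bm{\lambda}} = \prod_{i>p}\Big(1-\frac{\lambda_i}{\mu}\Big)^{-1/2},
\]
so that the tail of $Q_{\bm{\lambda}}$ is characterised to leading order by the triple (rate $1/(2\mu)$, polynomial order $p/2-1$, constant $A_{\bm{\lambda}}$), and tails can be compared lexicographically in this triple.

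Finally I would feed in the majorization. From $\bm{\lambda}\succ\bm{\delta}$ the case $s=1$ gives $\lambda_1\ge\delta_1$; if this is strict, the rate $1/(2\mu)$ of $Q_{\bm{\lambda}}$ is smaller, its tail is eventually heavier, and dominance is immediate. If $\lambda_1=\delta_1=\mu$, a partial-sum comparison (at $s=p_\lambda+1$) forces the multiplicities to satisfy $p_\lambda\ge p_\delta$; a strict inequality makes the polynomial factor of $Q_{\bm{\lambda}}$ of higher order, again giving dominance. In the remaining case $\mu$ and $p$ coincide, the common first $p$ coordinates cancel while equality at $s=n$ is inherited, so the tails $(\lambda_{p+1},\dots,\lambda_n)\succ(\delta_{p+1},\dots,\delta_n)$ still majorize; applying the majorization (Karamata) inequality to the strictly convex increasing function $x\mapsto -\log(1-x/\mu)$ gives $A_{\bm{\lambda}}\ge A_{\bm{\delta}}$, with equality only when the two vectors coincide. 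In every case the leading tail term of $Q_{\bm{\lambda}}$ dominates that of $Q_{\bm{\delta}}$, strictly unless $\bm{\lambda}=\bm{\delta}$, and the reduction of the first step then yields $c(\bm{\lambda})\ge c(\bm{\delta})$ for all sufficiently small $\alpha$.

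The main obstacle is the tail asymptotics: justifying that the lighter-tailed summand $Z$ contributes only the constant factor $M_Z(s_0)=A_{\bm{\lambda}}$ rather than altering the decay rate or polynomial order. This is the convolution-equivalence property and needs either an appeal to the $\mathcal{S}(\gamma)$ theory or a direct Laplace-inversion (saddlepoint/Watson's-lemma) estimate of the density near $s_0$. By contrast, the majorization bookkeeping in the last step is routine once the lexicographic ordering of tails is in place; alternatively, since the statement is attributed to \cite{bock1987inequalities}, one may simply cite their tail analysis and supply only the quantile reduction and the Karamata step.
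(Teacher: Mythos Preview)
The paper does not actually prove this theorem: it cites \cite{bock1987inequalities} and only restates the conclusion in cdf form (existence of a last crossing point $t_0$, with $\alpha_0$ the corresponding tail probability). Your proposal therefore supplies substantially more than the paper does, and the argument you sketch is correct. The quantile-to-tail reduction is clean; the lexicographic comparison of the leading tail asymptotics (exponential rate $1/(2\lambda_1)$, polynomial order $p/2-1$, prefactor $A_{\bm{\lambda}}$) is the natural route; and the majorization bookkeeping is sound: $\lambda_1\ge\delta_1$ from $s=1$, the multiplicity inequality $p_\lambda\ge p_\delta$ from the partial sum at $s=p_\lambda+1$, and in the equal-top case the residual majorization $(\lambda_{p+1},\dots,\lambda_n)\succ(\delta_{p+1},\dots,\delta_n)$ together with Karamata for the strictly convex $x\mapsto-\log(1-x/\mu)$ on $[0,\mu)$ gives $A_{\bm{\lambda}}\ge A_{\bm{\delta}}$, with equality forcing (by strict convexity and the descending order) $\bm{\lambda}=\bm{\delta}$.

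The only technical obligation, which you correctly flag, is the convolution-equivalence step ensuring that the lighter-tailed remainder $Z=\sum_{i>p}\lambda_i\chi^2_{1,i}$ alters the tail of $\mu\chi^2_p$ only by the finite constant $M_Z(1/(2\mu))=A_{\bm{\lambda}}$. A citation to the $\mathcal{S}(\gamma)$ theory (e.g.\ Embrechts--Goldie) or a direct Laplace/saddlepoint expansion around the singularity $s_0=1/(2\mu)$ of $\prod_i(1-2\lambda_i s)^{-1/2}$ would discharge it; alternatively, as you note, one can defer this step to \cite{bock1987inequalities} and keep only the quantile reduction and the Karamata comparison as the added value.
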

A proof of Theorem \ref{tm:main} is in \cite{bock1987inequalities}, we only change notations. Note that the theorem only holds for small significance level, i.e.\ $\alpha \leq \alpha_0$. 

By combining Theorem \ref{tm:main} and the definition of the majorizing vector, we define the maximal critical value as:
\begin{equation}
c_{max}(\ell) = c(\hat{\bm{\lambda}}^F_R(\ell)), 
\end{equation}
which has the property described in the following lemma,

\begin{lemma}\label{lm:cmax}
For $\alpha \leq \alpha_0$, $c_{max}(\ell)$ satisfies \eqref{eq:cmax_ii} for all $S$ with $R\subseteq S \subseteq F$.
\end{lemma}

In above lemma we may see that the validity of $c_{max}$ depends on $\alpha_0$, which has to be sufficiently large for Lemma \ref{lm:cmax} to be useful. \cite{diaconis1990bounds} compared the tail probabilities of $\sum\limits_{i=1}^n \lambda_i \chi^2_1$ and $\sum\limits_{i=1}^n \delta_i \chi^2_1$ with $\bm{\lambda} \succ \bm{\delta}$. They conjectured that the corresponding cumulative distribution functions (cdf) of $\sum\limits_{i=1}^n \lambda_i \chi^2_1$ and $\sum\limits_{i=1}^n \delta_i \chi^2_1$ cross exactly once, implying that $\alpha_0$ would be far from 0 or 1. However, their conjecture was disproved by \cite{yu2017unique} who showed that the two cdfs cross an odd number of times (but sometimes more than once). However, the cdf of $\sum\limits_{i=1}^n \lambda_i \chi^2_1$ will be always below that of $\sum\limits_{i=1}^n \delta_i \chi^2_1$ after the last crossing point, as Theorem \ref{tm:main} claims. The value of $\alpha_0$ in the paper is exactly tail probability corresponding to the last crossing point. Usually, practitioners would like to take significance level $\alpha = 5\%$, which requires $ 5\% \leq \alpha_0 $. We tested this in the real data examples, where we find that $\alpha_0$ is typically in the range of $25\%$ to $ 30\%$; more details are available in the supplementary file. Because of the model robustness of closed testing mentioned in Section \ref{sec:ct}, we only need $5\% \leq \alpha_0 $ for hypothesis $H_N$.

In the toy example in Figure \ref{fig:tc}, given the upper bound $\bm{\lambda}^F$ and the lower bound $\bm{\lambda}^R$ with $R=\{3\}$, the $c_{max}(\ell)$ line and the exact critical values $c_S$ for all $H_S$ are presented as dashed line and triangle points. It is clear that $c_{max}(\ell)$ is above all exact critical values. In addition to avoiding the exponentially many critical value computations, we further note that calculating $\hat{\bm{\lambda}}^F_R(\ell)$ for all possible levels only requires calculation of eigenvalues $\bm{\lambda}^F$ and $\bm{\lambda}^R$ once. This significantly reduces the computing time especially for large matrices (i.e.\ large $n$).

\subsection{Sure or Unsure Outcomes}\label{sec:unsure}

With everything set in place, we check whether $H_R$ can be rejected by the single-step shortcut via checking if the minimum test statistic line is above the maximal critical value line. If $g_{min}(\ell) \geq c_{max}(\ell), \ell \in [\ell_R, \ell_F]$, $H_R$ is surely rejected by the closed testing procedure based on Lemma \ref{lm:ss}. For example, $H_{\{3\}}$ in Figure \ref{fig:tc} is rejected by closed testing at level 5\%, as the $g_{min}$ line is totally above the $c_{max}$ line indicating that all hypotheses corresponding to the supersets of $\{3\}$ are rejected. Otherwise, we turn to the conclusion that $H_R$ cannot be rejected by closed testing so as to guarantee the FWER control. We thus summarize the ``reject'' and ``not reject'' rule of the single-step shortcut as:
\[
\text{Reject } H_R \text{ if } g_{min}(\ell) \geq c_{max}(\ell), \forall \ell \in [\ell_R, \ell_F] \text{ and do not reject } H_R \text{ otherwise.}
\]

In fact, however, uncertainty regarding to the ``not reject'' output of the single-step shortcut will occur when $g_{min}(\ell) < c_{max}(\ell)$ for some $\ell$. In this case, we further check the exact test statistics and exact critical values for all sets in $\mathcal{B}^F_R$, the bottommost points defined in Section \ref{sec:gmin}; see the supplement for technical definition. If there exists a set $B_i \in \mathcal{B}^F_R$ such that $g_{B_i} < c_{B_i}$, we are certain that closed testing does not reject $H_R$. For example when testing $H_{\{2\}}$ in Figure \ref{fig:tc2}, we find that Globaltest does not reject $H_{\{24\}}$ and $H_{\{245\}}$ so that $H_{\{2\}}$ cannot be rejected by closed testing. On the other hand, if $g_{B_i} \geq c_{B_i}$ for all $ B_i \in \mathcal{B}^F_R$, we will be inconclusive about the ``not reject'' of $H_R$ by closed testing, which is the case in the top subfigure of Figure \ref{fig:bb}, where we cannot determine that $H_{\{1\}}$ is not rejected by closed testing. 

\begin{figure}[H]
\centering
\includegraphics[scale=0.68]{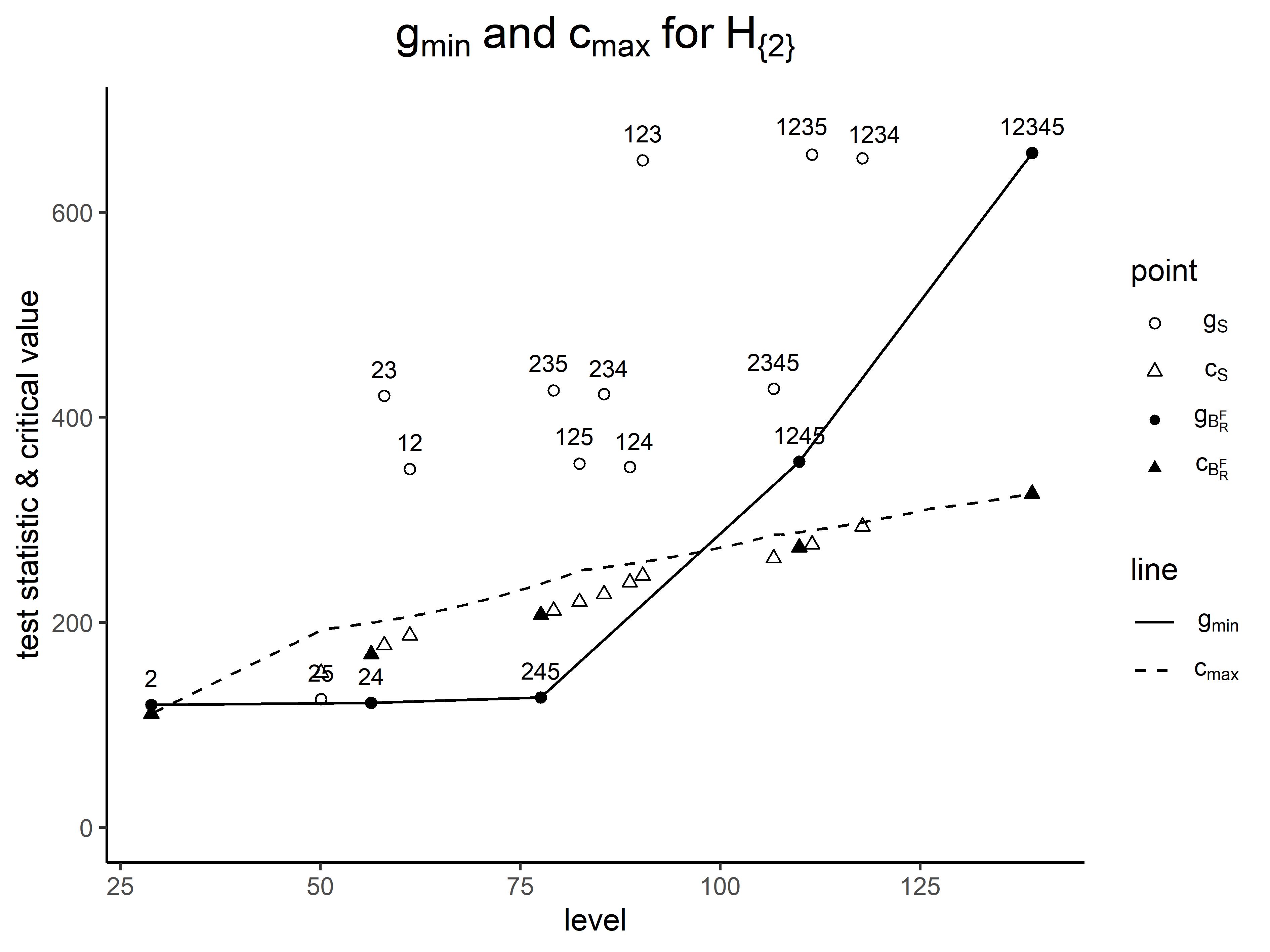}
\caption{Single-step shortcut for testing $H_{\{2\}}$. Filled circles and triangles represent the exact test statistics and critical values for $B_i \in \mathcal{B}^F_R$. }
\label{fig:tc2}
\end{figure}

To further improve the computational efficiency of the single-step shortcut, we derive a fast algorithm to check if $g_{min}(\ell) \geq c_{max}(\ell), \forall \ell \in [\ell_R, \ell_F]$ in the supplementary material, with an illustration on the toy example mentioned above. Regarding to the potential uncertainties of the single-step shortcut, we describe a novel improvement to it in the following section.

\section{Iterative Shortcut}\label{sec:it}
Clearly, the single-step shortcut is approximate in the sense that it gives at most the same rejections as the full closed testing procedure, but possibly fewer because we might get unsure outcomes. Next, we investigate how we can make it exact. If an unsure outcome is obtained from the single-step shortcut, we turn to the branch and bound algorithm of \cite{landautomatic1960}, which is a commonly used tool for solving NP-hard optimization problems. 

The branch and bound algorithm consists of two principles: a recursive branching rule that partitions the search space into smaller sub-spaces and a bounding rule that used for tracking the optimization in the sub-spaces and pruning those sub-spaces that it can prove will not contain an optional solution. \cite{westfall2007multiple} has introduced its application in closed testing with max T test. We show in this paper how it can be used to reduce the conservativeness of the single-step shortcut at the expense of an increased computational burden.

Suppose that the single-step shortcut produces an unsure outcome when testing $H_R$. This means that the $g_{min}$ line crosses the $c_{max}$ line in the space of $\{S: R\subseteq S \subseteq F\}$ and $g_{B_i} \geq c_{B_i}$ for all $B_i \in \mathcal{B}^F_R$. In terms of the branch and bound algorithm, we first split $\{S: R\subseteq S \subseteq F\}$ into two disjoint sub-spaces by distinguishing whether or not $u \in F\setminus R$ is included: $\mathbb{S}^1 = \{S: R\subseteq S \subseteq F\setminus \{u\} \}$ and $\mathbb{S}^2 = \{S: R \cup \{u\} \subseteq S \subseteq F\}$. We clearly explain in the supplement how to determine $u$ together with the technical constructions of $g_{min}$. Secondly, we compute the $g_{min}$ line and the $c_{max}$ line separately for each subspace. If $g_{min} \geq c_{max}$ in both subspaces, we stop branching and conclude that $H_R$ is rejected by closed testing, as all $H_S$ that are split into two subspaces are all rejected by Globaltest.  If there exists a subspace, say $\mathbb{S}^1$, such that $g_{B_i} < c_{B_i}$ for some $B_i \in \mathcal{B}_{R}^{F\setminus \{u\} }$, we can also stop branching and conclude that closed testing does not reject $H_R$. Otherwise, there exists a subspace for which uncertainty remains. In this case, we can repeat the above steps until we get sure outcomes or we exceed the allotted computational capacity.

Illustration of the branch and bound algorithm can be seen in Figure \ref{fig:bb}, where we are unsure to reject $H_1$ or not by the single-step shortcut. After splitting the full space into two: $\{S: R\subseteq S \subseteq F\setminus \{3\} \}$ and $\{S: R \cup \{3\} \subseteq S \subseteq F\}$, updated $g_{min}$ and $c_{max}$ lines are calculated in each subspace. It is clear that $g_{min} \geq c_{max}$ holds in each subspace so that $H_1$ is definitely rejected by closed testing. 

\begin{figure}[H]
\centering
\includegraphics[scale=0.78]{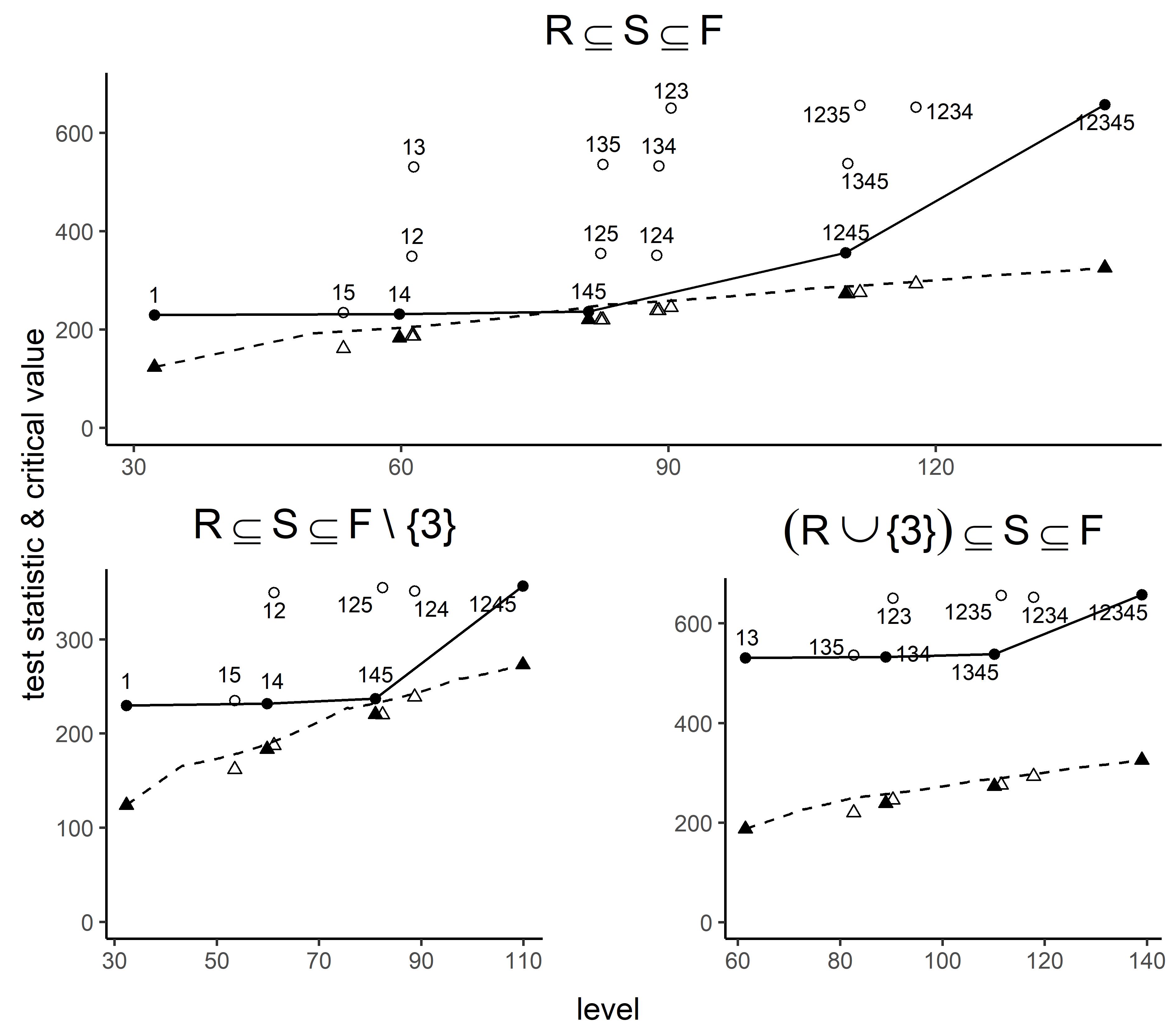}
\caption{Iterative shortcut rejects $H_{\{1\}}$.}
\label{fig:bb}
\end{figure}

By embedding the branch and bound algorithm, we iterate the single-step shortcut to achieve an iterative shortcut. We outline the iterative shortcut as an algorithm in the supplementary file. The theoretical property of the iterative shortcut is investigated by the following lemma, a proof of which can be seen in the supplement:
\begin{lemma}\label{lm:is}
Closed testing rejects $H_R$ only if it is rejected by the iterative shortcut.
\end{lemma}

Obviously, the stopping rule of the iterative shortcut depends on the number of iterations: how many times we iterate the single-step shortcut. The more iterations, the more power we gain. Moreover, we allow user to prespecify the number of iterations to save computation time but without sacrificing FWER control. If we apply the shortcut long enough so that no unsure outcomes left, the full closed testing solution will be obtained.

As described in Section \ref{sec:ct}, $\mathcal{X}$ is the rejection set of closed testing. Let $\mathcal{X}_{d}$ be the rejection set of the iterative shortcut with $d$ iterations prespecified. Specifically, $\mathcal{X}_{0}$ is the set of rejections by the single-step shortcut and $\mathcal{X}_{\infty} = \lim_{d \to \infty}\mathcal{X}_{d}$ is the asymptotic rejection set of iterative shortcut. Obviously, $\mathcal{X}_{\infty} = \mathcal{X}$ holds as there are totally $2^{m-r}$ sub-spaces created after infinite branching, which are exactly corresponding to all possible supersets of $R$. Thus, we have the following convergence property of the iterative shortcut:
\begin{lemma}\label{lm:is_prop}
$\mathcal{X}_{0} \subseteq \mathcal{X}_{d} \subseteq  \mathcal{X}_{\infty}  = \mathcal{X}$.
\end{lemma}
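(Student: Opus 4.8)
The plan is to prove the chain as three separate relations: the monotonicity $\mathcal{X}_0 \subseteq \mathcal{X}_\tau$ and $\mathcal{X}_\tau \subseteq \mathcal{X}_\infty$, the validity $\mathcal{X}_\tau \subseteq \mathcal{X}$, and the exactness $\mathcal{X} \subseteq \mathcal{X}_\infty$; the last two together give $\mathcal{X}_\infty = \mathcal{X}$. Throughout I read $\mathcal{X}_\tau$ as the set of hypotheses the iterative shortcut declares rejected once its branching budget reaches $\tau$, and I will exploit that a rejection verdict in Algorithm \ref{alg2} is terminal: the algorithm returns ``Rejected $H_R$'' precisely when the single-step shortcut rejects $H_R$ in every one of the current subspaces $\mathbb{S}_1,\dots,\mathbb{S}_{d_\tau}$, after which no further branching is performed.

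For the monotonicity I would argue that refining a partition can only enlarge the rejection set. Increasing the budget from $\tau$ to $\tau+1$ either leaves the partition unchanged (if a verdict has already been reached) or splits one still-unsure subspace into two; an already-rejected subspace is never split, so every subspace that was single-step-rejected at budget $\tau$ remains so at budget $\tau+1$. Hence $\mathcal{X}_\tau \subseteq \mathcal{X}_{\tau+1}$, and iterating gives $\mathcal{X}_0 \subseteq \mathcal{X}_\tau$ as well as $\mathcal{X}_\tau \subseteq \bigcup_{s} \mathcal{X}_s = \mathcal{X}_\infty$. If one prefers to see this at the level of the bounding curves rather than the control flow, the same conclusion follows from the fact that on a child subspace the bound $g_{min}$ can only rise while $c_{max}$ can only fall: the crux there is a majorization argument, since by the inclusion principle the child's eigenvalue bounds $\bm{\lambda}_{R'}$ and $\bm{\lambda}_{F'}$ are tighter, making $\hat{\bm{\lambda}}(\ell)$ less spread out so that the parent's majorizing vector majorizes the child's, whence Theorem \ref{tm:main} yields the smaller $c_{max}$.

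For validity I would invoke Lemma \ref{lm:is} directly: if $H_R \in \mathcal{X}_\tau$ then the single-step shortcut rejects $H_R$ in each $\mathbb{S}_i$ of a partition of $\mathbb{S}$, and Lemma \ref{lm:is} then certifies $g_S \geq c_S$ for every $S$ with $R \subseteq S \subseteq F$, i.e.\ $H_R \in \mathcal{X}$; taking the union over $\tau$ gives $\mathcal{X}_\infty \subseteq \mathcal{X}$. It remains to prove the reverse inclusion $\mathcal{X} \subseteq \mathcal{X}_\infty$, where I would branch every unsure subspace repeatedly until each leaf is a singleton $\mathbb{S}(S,S) = \{H_S\}$, which happens after at most $2^{\vert F \setminus R\vert}-1$ branchings and hence at finite $\tau$. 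On such a singleton $\ell_R = \ell_F = \ell_S$, so $g_{min}(\ell_S) = g_S$ and $\hat{\bm{\lambda}}(\ell_S) = \bm{\lambda}_S$, giving $c_{max}(\ell_S) = c_S$; the single-step verdict therefore collapses to the exact comparison $g_S \geq c_S$. If $H_R \in \mathcal{X}$ then $g_S \geq c_S$ holds for all such $S$, so every leaf rejects and the iterative shortcut rejects $H_R$ at this finite $\tau$, placing $H_R \in \mathcal{X}_\infty$; combining, $\mathcal{X}_\infty = \mathcal{X}$ and the whole chain follows. I expect the main obstacle to be the monotonicity step: arguing it cleanly requires pinning down that further branching never overturns a rejection, and the most delicate route (via the curves rather than the algorithm's terminal behaviour) leans on the majorization ordering of the $\hat{\bm{\lambda}}(\ell)$ together with Theorem \ref{tm:main}, which is exactly where the $\alpha \leq \alpha_0$ caveat must be respected.
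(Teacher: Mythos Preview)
Your proof is correct and follows essentially the same approach as the paper: monotonicity via the terminal nature of a rejection verdict (further branching only refines unsure subspaces), and exactness by branching down to the $2^{m-r}$ singleton subspaces where the single-step bound collapses to the local test. The paper's own argument is considerably terser and omits the explicit validity step through Lemma~\ref{lm:is} and the singleton calculation $g_{min}=g_S$, $c_{max}=c_S$ that you spell out; your alternative curve-level monotonicity via majorization is an extra observation not present in the paper.
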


There is clearly a trade-off between computing time and coming close to the full closed testing when applying branch and bound algorithm. We can trade time for power, i.e.\ more iterations, closer to the full closed testing procedure but also more burdensome. But we also allow to limit the number of iterations to reduce the computation time in practice.

\section{Applications on Metabolomics Data}

To investigate the power property of closed testing with Globaltest (CTGT), we apply it to four real metabolomics data sets, whose role on regulatory pathways of human pathophysiology, ranging from aging to disease, has been highlighted. The detailed information of the four data sets are listed in Table \ref{tab:4dsets}. ``Eisner'' is downloaded from MetaboAnalyst \citep{xia2015metaboanalyst}, ``Bordbar'', ``Taware'' and ``Al-Mutawa'' are studies published in MetaboLights \citep{haug2020metabolights}. To normalize the data containing only positive values, such as Eisner, we use the logarithm transformation with base 2. while for the data containing negative values or zeros, such as ``Bordbar'', ``Taware'' and ``Al-Mutawa'', we use the generalized logarithm \citep{durbin2002variance}. 

To be able to compare CTGT with DAG, Structured Holm (SH) and Focus Level (FL), we chose pathways of interest a priori: the union of pathways from Biocyc, KEGG, SMPDB and WikiPathways. The first three annotation vocabularies are provided by``MBROLE'' and the last is generated by ``rWikiPathways'' \citep{slenter2017wikipathways}. We include individual metabolites as single pathways, after removing missing values and filtering out lowly expressed metabolites. Information of pathways, including the total number of pathways to be tested, the mean size and the maximal size, is presented in Table \ref{tab:pathwaydb}. In addition, we consider closed testing with Simes's test (CTST) as a competitor, which is post hoc as well.     

\begin{table}[!htbp]
\caption{Information of four metabolomics data sets. } \label{tab:4dsets}
\centering
    \begin{tabular}{l l c l  }
Data set & Response & Cases/Controls &  Reference \\
    \toprule
       Eisner  & cachexic muscle loss & 47/30  & \cite{Eisner2011} \\
       Bordbar & stimulated lipopolysaccharide & 6/6  &\cite{bordbar2012model} \\ 
       Taware  & head and neck carcinoma & 53/39  &   \cite{taware2018volatilomic} \\  
       Al-Mutawa & hypoxic preconditioning & 25/19 &  \cite{al2018effects} \\         
   \bottomrule
  \end{tabular}
\end{table}

\begin{table}[!htbp]
\caption{ Summary of pathways per data set. } \label{tab:pathwaydb}
\centering
    \begin{tabular}{l c c c c c}
  Data set & Samples & Metabolites & Pathways & Mean size & Max size \\
    \toprule
       Eisner    &  77  &   63  & 187    & 2    & 17 \\
       Bordbar   &  12  &  51   & 250    & 3    & 38 \\ 
       Taware    &  92  &  47   & 61     & 2    & 12 \\  
       Al-Mutawa &  44  &  261  & 760    & 10   & 173 \\         
   \bottomrule
  \end{tabular}
\end{table}

We describe the total number of rejections per method together with the number of shared rejections of any two methods as a lower triangular table for each data set separately, see in Table \ref{tab:EBTA}. Within each data set, $\text{CTGT}$ represents the exact closed testing with Globaltest, i.e.\ the iterative shortcut without unsure outcomes, which can be achieved by setting a large enough number of iterations, we set 20000 in this analysis. This does not mean we have to iterate 20000 times, since the iterative shortcut stops whenever there is no unsure outcomes left. For example, no unsure outcomes are left after iterating 2645 times for Eisner.

It is shown in Table \ref{tab:EBTA} that the CTGT method discovers more pathways than its competitors for data sets Eisner and Bordbar. Especially for Bordbar with only 12 samples but 51 metabolites, the small sample size could weaken the effects of metabolites to some extent, thereby leading to good power of Globaltest. Furthermore, the small sample size influences CTGT method less than the Bonferroni-based methods and CTST due to their reliance on very small tail probabilities. We note that not all results are in favor of CTGT, for example for Taware and Al-Mutawa in Table \ref{tab:EBTA}. In Taware, the low dimensionality and relatively few small-size pathways to be tested make DAG, SH and FL powerful. Remark that we chose the pathways of interest a priori, but only CTST and CTGT retain type I error control if pathways are chosen post hoc. In other words, the more pathway databases included, the less powerful DAG, SH and FL become.

\begin{table}[!htbp]
\caption{Number of rejections per method on the diagonal and number of shared rejections of any two methods under the diagonal for Eisner, Bordbar, Taware and Al-Mutawa.} \label{tab:EBTA}
\centering
    \begin{tabular}{l c c c  c c  l c c c  c c  }
    \toprule
    \multicolumn{6}{c}{Eisner} & \multicolumn{6}{c}{Bordbar} \\  \cline{2-6} \cline{ 8 - 12}
                            &  $\text{CTGT} $    &  CTST         &    SH     &     FL     &     DAG   & & $\text{CTGT}$     &   CTST          &    FL             &     DAG    &     SH       \\              

  & 144                &                 &           &            &                  & & 248                &                   &                   &             &                           \\

  & 130                 &   139          &           &            &                   &  & 244                &  244              &                   &             &            \\

  & 101                 &   102          &  102       &           &                    &     & 105                &   105             &    105            &             &            \\ 
  & 89                 &   89            &   88       &    89     &                   &        & 62                 &   62              &     40            &   62         &            \\
  & 88                 &   88             &   87      &     84     &   88             &      & 0                  &   0               &     0             &   0          &    0       \\

    \multicolumn{6}{c}{Taware}& \multicolumn{6}{c}{Al-Mutawa} \\  \cline{2-6} \cline{ 8 - 12}  
  &     DAG                &  SH                 &             CTST     &      FL             &  $\text{CTGT}$    & &  CTST            &          SH     &        DAG       &     FL      &   $\text{CTGT}$       \\           
 & 32                 &                      &                        &                     &        && 704               &                &                   &            &          \\
 & 32                 &   32                 &                         &                    &        && 693               &  693           &                   &            &             \\
 & 32                 &   32                &  32                     &                     &        && 683               &   681          &  683              &            &           \\
 & 30                 &   30                &  30                     &   30                &        && 653               &   653          &  651              & 653         &              \\
 & 24                 &   24                 &  24                    &   24                &    27  && 585               &   583          &  585              &   580       &    586    \\       
   \bottomrule
  \end{tabular}
\end{table}

To gain insight into the rejected pathways per method, we take Eisner as an illustration.
We sort all rejected pathways by the rank of their size as shown in Figure \ref{fig:duri}, where we can see that the rejected pathways by CTGT are mainly large ones. This implies that CTGT is preferable for testing large-size pathways. CTST, DAG, SH and FL are, in contrast, powerful for small-size pathways, especially when there are many features with strong signal. However, they are sensitive to the total number of hypotheses and the sample size. 

\begin{figure}[!htbp]
\centering
\includegraphics[scale=0.68]{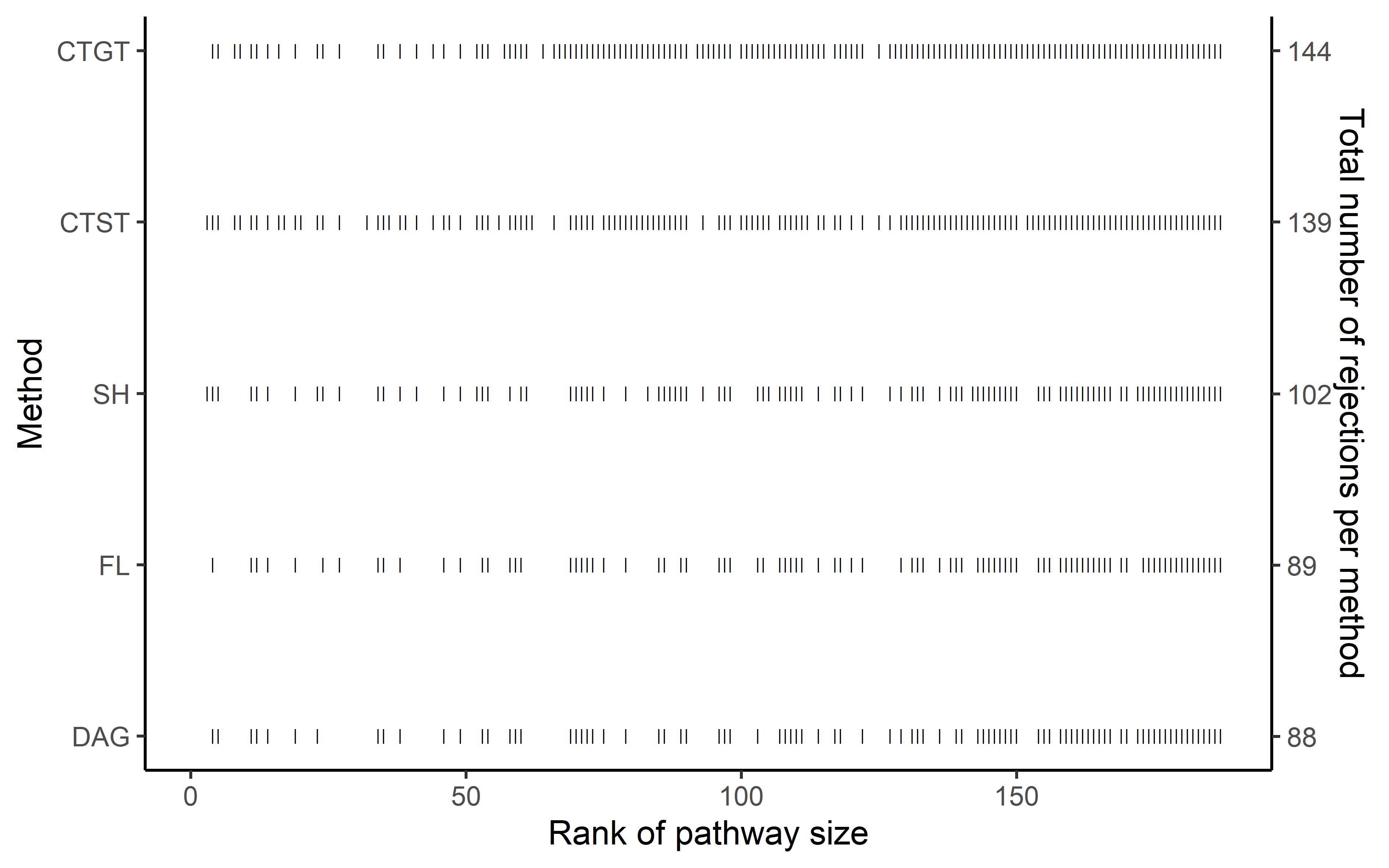}
\caption{Rejected pathways per method for Eisner.  }
\label{fig:duri}
\end{figure}

As demonstrated in Lemma \ref{lm:is_prop} that CTGT method gets more powerful with more iterations, and when it has zero unsure outcomes it corresponds to the exact closed testing procedure. We take Al-Mutawa as an example to illustrate Lemma \ref{lm:is_prop}. We first apply the single-step shortcut to Al-Mutawa, which generates 58 unsure outcomes. Given a sequence of increasing numbers as specified iterations, we show in Figure \ref{fig:time} that more iterations, less unsure outcomes but more certain outcomes (i.e.\ 57 are surely rejected and 1 is surely not rejected) and accordingly longer computing time. Also note that the gain of power is mainly from the initial 5000 iterations, suggesting that unnecessary computing burden can be avoid by the iterative shortcut without much loss of power. The computing time is calculated based on a Dell PowerEdage M620 machine with CPU Intel E5-2697.

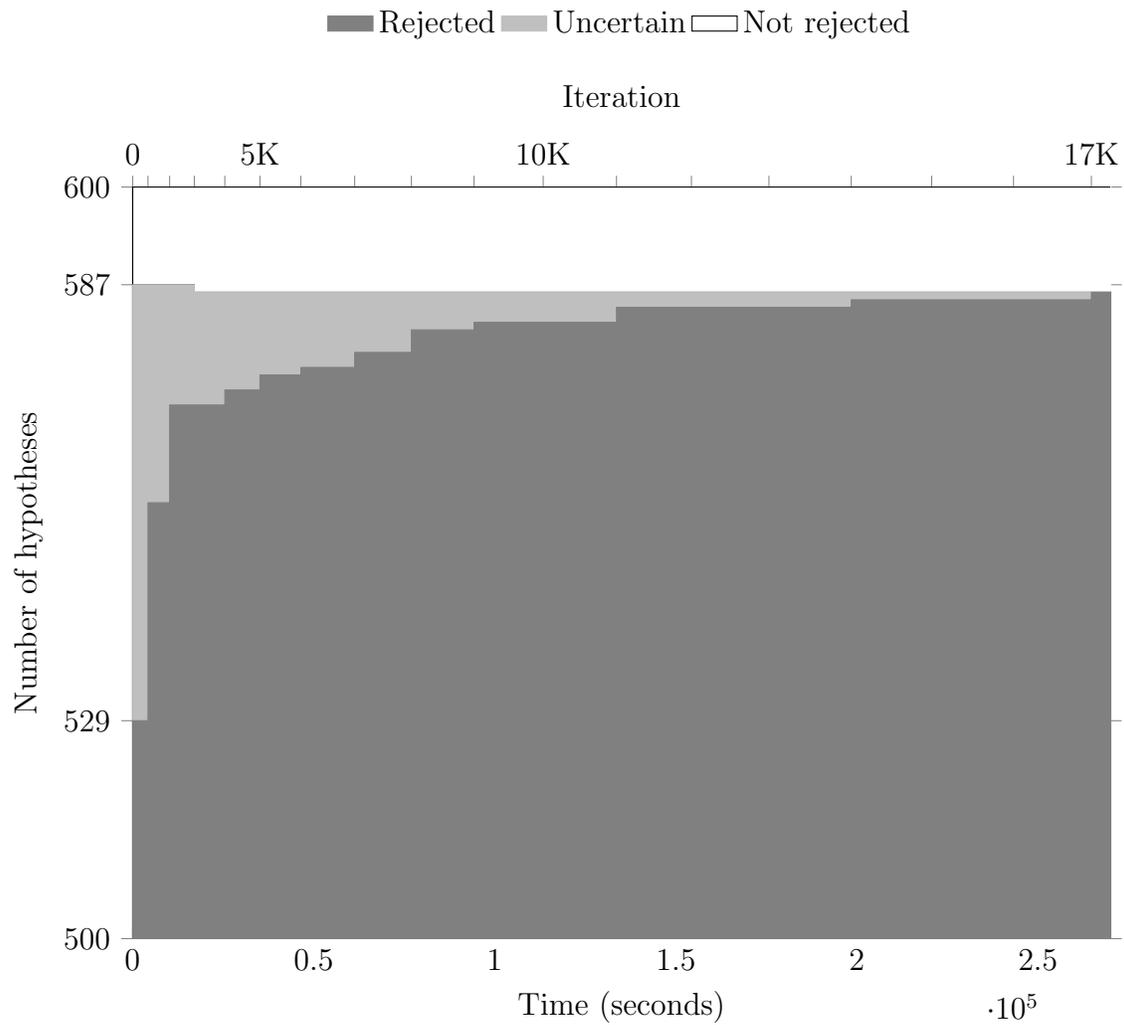
\begin{figure}[!htbp]
\centering
\begin{tikzpicture}
\pgfplotsset{every axis/.style={ymin=0, xmin=0}}
	\begin{axis}[const plot,
	height=10cm,
width=13cm,
	scale only axis,
	xtick = {0,50000,100000,150000,200000,250000},
	ytick align=outside,
	xmin=0, xmax = 270000,
	ymin=500, ymax = 600,
	ytick={500,529,587,600},
    yticklabels={500,529,587,600},
	enlarge x limits = false,
	stack plots=y,
	ylabel={Number of hypotheses},
	xlabel={Time (seconds)},
	cycle list={%
		{gray, fill=gray},%
		{lightgray, fill=lightgray},
		{black, fill=white}},
         legend style={
         draw=none,
			area legend,
			at={(0.5,1.25)},
			anchor=north,
			legend columns=3}]]
	\addplot table[x=time, y=reject, col sep=comma]  {A.csv} \closedcycle;
	\addplot table[x=time, y=unsure, col sep=comma]  {A.csv} \closedcycle;
	\addplot table[x=time, y=accept, col sep=comma]  {A.csv} \closedcycle;
	\legend{Rejected,Uncertain,Not rejected};
	\end{axis}
       \begin{axis}[scale only axis,
	xmin=0, xmax = 270,
	ymin=500, ymax = 600,
		height=10cm,
        width=13cm,
        axis x line*=right, 
        axis y line = none,
        xtick ={0.049,4.255,10.24,17.04,25.474,35.173,46.473,61.329,76.994,94.291,113.375,133.618,154.308,175.712,198.435,220.656,243.243,264.74},
        xlabel= Iteration,
        xtick align=outside,
	x label style={at={(0.5,1.15)},anchor=north},
xticklabels={0,,,,,5K,,,,,10K,,,,,,,17K}]
        ]
	\end{axis}
\end{tikzpicture}
	\caption{Number of rejected, unsure and not rejected hypotheses with increasing computing time and iterations.}
\label{fig:time}
\end{figure}

\section{Discussion}
We have proposed a novel multiple testing procedure based on closed testing with Globaltest, with main applications on metabolomics annotation databases, for example pathway databases. Closed testing is a powerful and robust procedure for FWER control. Globaltest is a powerful test for pathway analysis, especially when there are many weak features in the data. It can adapt to the actual correlation structure of features. Moreover, it is valid in high-dimensional settings, which are becoming increasingly common especially in the area of biomedical science. 

Our method is post hoc, i.e.\ it controls FWER for all possible feature sets so that it allows the choice of feature sets of interest to be made after seeing the data. This is the most distinguishing property of our method. When multiple annotation databases are of interest, type I error remains under control, even when databases are chosen in a data-driven way. Still, the new method has comparable power to competing methods even when databases are specified beforehand. 

To reduce the computational burden of closed testing, we have derived a shortcut in two steps: a single-step shortcut and an iterative shortcut. Based on these shortcuts, a hypothesis may be rejected by closed testing without carrying out the full procedure. The single-step shortcut might still have inconclusive outcomes. It is complemented by the iterative shortcut using the branch and bound algorithm on pathways with unsure outcome. We allow the iterative shortcut to stop at any point while retaining FWER control. With more iterations the method will get closer to the full closed testing procedure but will also be more time-consuming. Once there are no uncertain outcomes left, the iterative shortcut becomes the full closed testing procedure. We have implemented both shortcuts in the \texttt{R} package \texttt{ctgt}.

A potential limitation of the method is that it is only valid for small significance level $\alpha$, i.e.\ $\alpha \leq \alpha_0$, but we have shown in the supplement that $\alpha_0$ is usually greater than most values of $\alpha$ that are used in practice. 

Another advantage of closed testing with Globaltest is that no tests are performed at reduced significance levels, what the Bonferroni-based methods do. This may improve the finite sample performance of the method, since the convergence of the type I error of tests to their nominal size tends to be faster for large $\alpha$ levels than for small $\alpha$ levels \citep{feller1957introduction}.

\bibliographystyle{Chicago}
\bibliography{ctgtbiblio}

\begin{thebibliography}{}

\bibitem[\protect\citeauthoryear{Al-Mutawa, Herrmann, Corbishley, Losty,
  Phelan, and S{\'e}e}{Al-Mutawa et~al.}{2018}]{al2018effects}
Al-Mutawa, Y.~K., A.~Herrmann, C.~Corbishley, P.~D. Losty, M.~Phelan, and
  V.~S{\'e}e (2018).
\newblock Effects of hypoxic preconditioning on neuroblastoma tumour
  oxygenation and metabolic signature in a chick embryo model.
\newblock {\em Bioscience reports\/}~{\em 38\/}(4), BSR20180185.

\bibitem[\protect\citeauthoryear{Bock, Diaconis, Huffer, and Perlman}{Bock
  et~al.}{1987}]{bock1987inequalities}
Bock, M., P.~Diaconis, F.~Huffer, and M.~Perlman (1987).
\newblock Inequalities for linear combinations of gamma random variables.
\newblock {\em Canadian Journal of Statistics\/}~{\em 15\/}(4), 387--395.

\bibitem[\protect\citeauthoryear{Bordbar, Mo, Nakayasu, Schrimpe-Rutledge, Kim,
  Metz, Jones, Frank, Smith, Peterson, et~al.}{Bordbar
  et~al.}{2012}]{bordbar2012model}
Bordbar, A., M.~L. Mo, E.~S. Nakayasu, A.~C. Schrimpe-Rutledge, Y.-M. Kim,
  T.~O. Metz, M.~B. Jones, B.~C. Frank, R.~D. Smith, S.~N. Peterson, et~al.
  (2012).
\newblock Model-driven multi-omic data analysis elucidates metabolic
  immunomodulators of macrophage activation.
\newblock {\em Molecular systems biology\/}~{\em 8\/}(1).

\bibitem[\protect\citeauthoryear{Brannath and Bretz}{Brannath and
  Bretz}{2010}]{brannath2010shortcuts}
Brannath, W. and F.~Bretz (2010).
\newblock Shortcuts for locally consonant closed test procedures.
\newblock {\em Journal of the American Statistical Association\/}~{\em
  105\/}(490), 660--669.

\bibitem[\protect\citeauthoryear{Br{\"o}nnimann, Iacono, Katajainen, Morin,
  Morrison, and Toussaint}{Br{\"o}nnimann et~al.}{2004}]{bronnimann2004space}
Br{\"o}nnimann, H., J.~Iacono, J.~Katajainen, P.~Morin, J.~Morrison, and
  G.~Toussaint (2004).
\newblock Space-efficient planar convex hull algorithms.
\newblock {\em Theoretical Computer Science\/}~{\em 321\/}(1), 25--40.

\bibitem[\protect\citeauthoryear{Diaconis and Perlman}{Diaconis and
  Perlman}{1990}]{diaconis1990bounds}
Diaconis, P. and M.~D. Perlman (1990).
\newblock {\em Bounds for tail probabilities of weighted sums of independent
  gamma random variables}, Volume Volume 16 of {\em Lecture Notes--Monograph
  Series}, pp.\  147--166.
\newblock Hayward, CA: Institute of Mathematical Statistics.

\bibitem[\protect\citeauthoryear{Dinu, Potter, Mueller, Liu, Adewale, Jhangri,
  Einecke, Famulski, Halloran, and Yasui}{Dinu
  et~al.}{2007}]{dinu2007improving}
Dinu, I., J.~D. Potter, T.~Mueller, Q.~Liu, A.~J. Adewale, G.~S. Jhangri,
  G.~Einecke, K.~S. Famulski, P.~Halloran, and Y.~Yasui (2007).
\newblock Improving gene set analysis of microarray data by sam-gs.
\newblock {\em BMC bioinformatics\/}~{\em 8\/}(1), 242.

\bibitem[\protect\citeauthoryear{Dobriban}{Dobriban}{2018}]{dobriban2018flexible}
Dobriban, E. (2018).
\newblock Flexible multiple testing with the fact algorithm.
\newblock {\em arXiv preprint arXiv:1806.10163\/}.

\bibitem[\protect\citeauthoryear{Durbin, Hardin, Hawkins, and Rocke}{Durbin
  et~al.}{2002}]{durbin2002variance}
Durbin, B.~P., J.~S. Hardin, D.~M. Hawkins, and D.~M. Rocke (2002).
\newblock A variance-stabilizing transformation for gene-expression microarray
  data.
\newblock {\em Bioinformatics\/}~{\em 18\/}(suppl\_1), S105--S110.

\bibitem[\protect\citeauthoryear{Ebrahimpoor, Spitali, Hettne, Tsonaka, and
  Goeman}{Ebrahimpoor et~al.}{2019}]{mitra2019}
Ebrahimpoor, M., P.~Spitali, K.~Hettne, R.~Tsonaka, and J.~Goeman (2019, 07).
\newblock {Simultaneous Enrichment Analysis of all Possible Gene-sets: Unifying
  Self-Contained and Competitive Methods}.
\newblock {\em Briefings in Bioinformatics\/}.

\bibitem[\protect\citeauthoryear{Eisner, Stretch, Eastman, Xia, Hau, Damaraju,
  Greiner, Wishart, and Baracos}{Eisner et~al.}{2011}]{Eisner2011}
Eisner, R., C.~Stretch, T.~Eastman, J.~Xia, D.~Hau, S.~Damaraju, R.~Greiner,
  D.~S. Wishart, and V.~E. Baracos (2011).
\newblock Learning to predict cancer-associated skeletal muscle wasting from
  1h-nmr profiles of urinary metabolites.
\newblock {\em Metabolomics\/}~{\em 7\/}(1), 25--34.

\bibitem[\protect\citeauthoryear{Feller}{Feller}{1957}]{feller1957introduction}
Feller, W. (1957).
\newblock {\em An introduction to probability theory and its applications},
  Volume~1.
\newblock John Wiley \& Sons.

\bibitem[\protect\citeauthoryear{Frolkis, Knox, Lim, Jewison, Law, Hau, Liu,
  Gautam, Ly, Guo, et~al.}{Frolkis et~al.}{2010}]{frolkis2010smpdb}
Frolkis, A., C.~Knox, E.~Lim, T.~Jewison, V.~Law, D.~D. Hau, P.~Liu, B.~Gautam,
  S.~Ly, A.~C. Guo, et~al. (2010).
\newblock Smpdb: the small molecule pathway database.
\newblock {\em Nucleic acids research\/}~{\em 38\/}(suppl\_1), D480--D487.

\bibitem[\protect\citeauthoryear{Gail, Wieand, and Piantadosi}{Gail
  et~al.}{1984}]{gail1984biased}
Gail, M.~H., S.~Wieand, and S.~Piantadosi (1984).
\newblock Biased estimates of treatment effect in randomized experiments with
  nonlinear regressions and omitted covariates.
\newblock {\em Biometrika\/}~{\em 71\/}(3), 431--444.

\bibitem[\protect\citeauthoryear{Goeman, Hemerik, and Solari}{Goeman
  et~al.}{2019}]{goeman2019only}
Goeman, J., J.~Hemerik, and A.~Solari (2019).
\newblock Only closed testing procedures are admissible for controlling false
  discovery proportions.
\newblock {\em arXiv preprint arXiv:1901.04885\/}.

\bibitem[\protect\citeauthoryear{Goeman and B{\"u}hlmann}{Goeman and
  B{\"u}hlmann}{2007}]{goeman2007analyzing}
Goeman, J.~J. and P.~B{\"u}hlmann (2007).
\newblock Analyzing gene expression data in terms of gene sets: methodological
  issues.
\newblock {\em Bioinformatics\/}~{\em 23\/}(8), 980--987.

\bibitem[\protect\citeauthoryear{Goeman and Mansmann}{Goeman and
  Mansmann}{2008}]{Goeman2008multiple}
Goeman, J.~J. and U.~Mansmann (2008).
\newblock Multiple testing on the directed acyclic graph of gene ontology.
\newblock {\em Bioinformatics\/}~{\em 24\/}(4), 537--544.

\bibitem[\protect\citeauthoryear{Goeman, van~de Geer, de~Kort, and van
  Houwelingen}{Goeman et~al.}{2004}]{goeman2004global}
Goeman, J.~J., S.~A. van~de Geer, F.~de~Kort, and H.~C. van Houwelingen (2004).
\newblock A global test for groups of genes: testing association with a
  clinical outcome.
\newblock {\em Bioinformatics\/}~{\em 20\/}(1), 93--99.

\bibitem[\protect\citeauthoryear{Goeman, van~de Geer, and van
  Houwelingen}{Goeman et~al.}{2006}]{goeman2006testing}
Goeman, J.~J., S.~A. van~de Geer, and H.~C. van Houwelingen (2006).
\newblock Testing against a high dimensional alternative.
\newblock {\em Journal of the Royal Statistical Society - Series B\/}~{\em
  68\/}(3), pp.477--493.

\bibitem[\protect\citeauthoryear{Goeman, van Houwelingen, and Finos}{Goeman
  et~al.}{2011}]{goeman2011testing}
Goeman, J.~J., H.~C. van Houwelingen, and L.~Finos (2011).
\newblock Testing against a high-dimensional alternative in the generalized
  linear model: asymptotic type i error control.
\newblock {\em Biometrika\/}~{\em 98\/}(2), 381--390.

\bibitem[\protect\citeauthoryear{Gou, Tamhane, Xi, and Rom}{Gou
  et~al.}{2014}]{gou2014class}
Gou, J., A.~C. Tamhane, D.~Xi, and D.~Rom (2014).
\newblock A class of improved hybrid hochberg--hommel type step-up multiple
  test procedures.
\newblock {\em Biometrika\/}~{\em 101\/}(4), 899--911.

\bibitem[\protect\citeauthoryear{Haug, Cochrane, Nainala, Williams, Chang,
  Jayaseelan, and O’Donovan}{Haug et~al.}{2020}]{haug2020metabolights}
Haug, K., K.~Cochrane, V.~C. Nainala, M.~Williams, J.~Chang, K.~V. Jayaseelan,
  and C.~O’Donovan (2020).
\newblock Metabolights: a resource evolving in response to the needs of its
  scientific community.
\newblock {\em Nucleic Acids Research\/}~{\em 48\/}(D1), D440--D444.

\bibitem[\protect\citeauthoryear{Horn and Johnson}{Horn and
  Johnson}{2012}]{horn2012matrix}
Horn, R.~A. and C.~R. Johnson (2012).
\newblock {\em Matrix analysis}.
\newblock Cambridge university press.

\bibitem[\protect\citeauthoryear{Imhof}{Imhof}{1961}]{Imhof1961}
Imhof, J. (1961).
\newblock Computing the distribution of quadratic forms in normal variables.
\newblock {\em Biometrika\/}~{\em 48\/}(3/4), 419--426.

\bibitem[\protect\citeauthoryear{Kanehisa et~al.}{Kanehisa
  et~al.}{2002}]{kanehisa2002kegg}
Kanehisa, M. et~al. (2002).
\newblock The kegg database.
\newblock In {\em Novartis Foundation Symposium}, pp.\  91--100. Wiley Online
  Library.

\bibitem[\protect\citeauthoryear{Karp, Billington, Caspi, Fulcher, Latendresse,
  Kothari, Keseler, Krummenacker, Midford, Ong, et~al.}{Karp
  et~al.}{2019}]{karp2019biocyc}
Karp, P.~D., R.~Billington, R.~Caspi, C.~A. Fulcher, M.~Latendresse,
  A.~Kothari, I.~M. Keseler, M.~Krummenacker, P.~E. Midford, Q.~Ong, et~al.
  (2019).
\newblock The biocyc collection of microbial genomes and metabolic pathways.
\newblock {\em Briefings in Bioinformatics\/}~{\em 20\/}(4), 1085--1093.

\bibitem[\protect\citeauthoryear{Kim and Volsky}{Kim and
  Volsky}{2005}]{kim2005page}
Kim, S.-Y. and D.~J. Volsky (2005).
\newblock Page: parametric analysis of gene set enrichment.
\newblock {\em BMC bioinformatics\/}~{\em 6\/}(1), 144.

\bibitem[\protect\citeauthoryear{Land and Doig}{Land and
  Doig}{1960}]{landautomatic1960}
Land, A.~H. and A.~G. Doig (1960).
\newblock An automatic method of solving discrete programming problems.
\newblock {\em Econometrica\/}~{\em 28\/}(3), 497--520.

\bibitem[\protect\citeauthoryear{Liki{\'c}}{Liki{\'c}}{2006}]{likic2006databases}
Liki{\'c}, V.~A. (2006).
\newblock Databases of metabolic pathways.
\newblock {\em Biochemistry and Molecular Biology Education\/}~{\em 34\/}(6),
  408--412.

\bibitem[\protect\citeauthoryear{L{\'o}pez-Ib{\'a}{\~n}ez, Pazos, and
  Chagoyen}{L{\'o}pez-Ib{\'a}{\~n}ez et~al.}{2016}]{lopez2016mbrole}
L{\'o}pez-Ib{\'a}{\~n}ez, J., F.~Pazos, and M.~Chagoyen (2016).
\newblock Mbrole 2.0—functional enrichment of chemical compounds.
\newblock {\em Nucleic acids research\/}~{\em 44\/}(W1), W201--W204.

\bibitem[\protect\citeauthoryear{Marcus, Eric, and Gabriel}{Marcus
  et~al.}{1976}]{marcus1976closed}
Marcus, R., P.~Eric, and K.~R. Gabriel (1976).
\newblock On closed testing procedures with special reference to ordered
  analysis of variance.
\newblock {\em Biometrika\/}~{\em 63\/}(3), 655--660.

\bibitem[\protect\citeauthoryear{Meijer and Goeman}{Meijer and
  Goeman}{2015}]{Meijer2015dag}
Meijer, R.~J. and J.~J. Goeman (2015).
\newblock A multiple testing method for hypotheses structured in a directed
  acyclic graph.
\newblock {\em Biometrical Journal\/}~{\em 57\/}(1), 123--143.

\bibitem[\protect\citeauthoryear{Meijer and Goeman}{Meijer and
  Goeman}{2016}]{meijer2016multiple}
Meijer, R.~J. and J.~J. Goeman (2016).
\newblock Multiple testing of gene sets from gene ontology: possibilities and
  pitfalls.
\newblock {\em Briefings in bioinformatics\/}~{\em 17\/}(5), 808--818.

\bibitem[\protect\citeauthoryear{Mubeen, Hoyt, Gem{\"u}nd, Hofmann-Apitius,
  Fr{\"o}hlich, and Domingo-Fern{\'a}ndez}{Mubeen
  et~al.}{2019}]{mubeen2019impact}
Mubeen, S., C.~T. Hoyt, A.~Gem{\"u}nd, M.~Hofmann-Apitius, H.~Fr{\"o}hlich, and
  D.~Domingo-Fern{\'a}ndez (2019).
\newblock The impact of pathway database choice on statistical enrichment
  analysis and predictive modeling.
\newblock {\em Frontiers in genetics\/}~{\em 10}, 1203.

\bibitem[\protect\citeauthoryear{Robbins and Pitman}{Robbins and
  Pitman}{1949}]{Robbins1949}
Robbins, H. and E.~J.~G. Pitman (1949, 12).
\newblock Application of the method of mixtures to quadratic forms in normal
  variates.
\newblock {\em Ann. Math. Statist.\/}~{\em 20\/}(4), 552--560.

\bibitem[\protect\citeauthoryear{Rosenblatt, Finos, Weeda, Solari, and
  Goeman}{Rosenblatt et~al.}{2018}]{rosenblatt2018all}
Rosenblatt, J.~D., L.~Finos, W.~D. Weeda, A.~Solari, and J.~J. Goeman (2018).
\newblock All-resolutions inference for brain imaging.
\newblock {\em Neuroimage\/}~{\em 181}, 786--796.

\bibitem[\protect\citeauthoryear{Sarkar and Chang}{Sarkar and
  Chang}{1997}]{sarkar1997simes}
Sarkar, S.~K. and C.-K. Chang (1997).
\newblock The simes method for multiple hypothesis testing with positively
  dependent test statistics.
\newblock {\em Journal of the American Statistical Association\/}~{\em
  92\/}(440), 1601--1608.

\bibitem[\protect\citeauthoryear{Simes}{Simes}{1986}]{simes1986improved}
Simes, R.~J. (1986).
\newblock An improved bonferroni procedure for multiple tests of significance.
\newblock {\em Biometrika\/}~{\em 73\/}(3), 751--754.

\bibitem[\protect\citeauthoryear{Slenter, Kutmon, Hanspers, Riutta, Windsor,
  Nunes, M{\'e}lius, Cirillo, Coort, Digles, et~al.}{Slenter
  et~al.}{2017}]{slenter2017wikipathways}
Slenter, D.~N., M.~Kutmon, K.~Hanspers, A.~Riutta, J.~Windsor, N.~Nunes,
  J.~M{\'e}lius, E.~Cirillo, S.~L. Coort, D.~Digles, et~al. (2017).
\newblock Wikipathways: a multifaceted pathway database bridging metabolomics
  to other omics research.
\newblock {\em Nucleic acids research\/}~{\em 46\/}(D1), D661--D667.

\bibitem[\protect\citeauthoryear{Subramanian, Tamayo, Mootha, Mukherjee, Ebert,
  Gillette, Paulovich, Pomeroy, Golub, Lander, et~al.}{Subramanian
  et~al.}{2005}]{subramanian2005gene}
Subramanian, A., P.~Tamayo, V.~K. Mootha, S.~Mukherjee, B.~L. Ebert, M.~A.
  Gillette, A.~Paulovich, S.~L. Pomeroy, T.~R. Golub, E.~S. Lander, et~al.
  (2005).
\newblock Gene set enrichment analysis: a knowledge-based approach for
  interpreting genome-wide expression profiles.
\newblock {\em Proceedings of the National Academy of Sciences\/}~{\em
  102\/}(43), 15545--15550.

\bibitem[\protect\citeauthoryear{Taware, Taunk, Pereira, Shirolkar, Soneji,
  C{\^a}mara, Nagarajaram, and Rapole}{Taware
  et~al.}{2018}]{taware2018volatilomic}
Taware, R., K.~Taunk, J.~A. Pereira, A.~Shirolkar, D.~Soneji, J.~S. C{\^a}mara,
  H.~Nagarajaram, and S.~Rapole (2018).
\newblock Volatilomic insight of head and neck cancer via the effects observed
  on saliva metabolites.
\newblock {\em Scientific reports\/}~{\em 8\/}(1), 17725.

\bibitem[\protect\citeauthoryear{Wang, Li, and Hakonarson}{Wang
  et~al.}{2010}]{wang2010analysing}
Wang, K., M.~Li, and H.~Hakonarson (2010).
\newblock Analysing biological pathways in genome-wide association studies.
\newblock {\em Nature Reviews Genetics\/}~{\em 11\/}(12), 843--854.

\bibitem[\protect\citeauthoryear{Westfall and Tobias}{Westfall and
  Tobias}{2007}]{westfall2007multiple}
Westfall, P.~H. and R.~D. Tobias (2007).
\newblock Multiple testing of general contrasts: Truncated closure and the
  extended shaffer--royen method.
\newblock {\em Journal of the American Statistical Association\/}~{\em
  102\/}(478), 487--494.

\bibitem[\protect\citeauthoryear{Xia, Sinelnikov, Han, and Wishart}{Xia
  et~al.}{2015}]{xia2015metaboanalyst}
Xia, J., I.~V. Sinelnikov, B.~Han, and D.~S. Wishart (2015).
\newblock Metaboanalyst 3.0—making metabolomics more meaningful.
\newblock {\em Nucleic acids research\/}~{\em 43\/}(W1), W251--W257.

\bibitem[\protect\citeauthoryear{Yu}{Yu}{2017}]{yu2017unique}
Yu, Y. (2017).
\newblock On the unique crossing conjecture of diaconis and perlman on
  convolutions of gamma random variables.
\newblock {\em The Annals of Applied Probability\/}~{\em 27\/}(6), 3893--3910.

\end{thebibliography}

\end{document}